\newcommand{\argmin}{\operatornamewithlimits{argmin}}
\newcommand{\argmax}{\operatornamewithlimits{argmax}}
\begin{document}

\title{Efficient Rectangular Maximal-Volume\\Algorithm for Rating Elicitation\\in Collaborative Filtering}

\author{\IEEEauthorblockN{Alexander Fonarev\IEEEauthorrefmark{1}\IEEEauthorrefmark{2}\IEEEauthorrefmark{3}, Alexander Mikhalev\IEEEauthorrefmark{4}, Pavel Serdyukov\IEEEauthorrefmark{2}, Gleb Gusev\IEEEauthorrefmark{2}, Ivan Oseledets\IEEEauthorrefmark{1}\IEEEauthorrefmark{5}}
\IEEEauthorblockA{\IEEEauthorrefmark{1}Skolkovo Institute of Science and Technology, Moscow, Russia\\
\IEEEauthorrefmark{2}Yandex, Moscow, Russia\\
\IEEEauthorrefmark{3}SBDA Group, Dublin, Ireland\\
\IEEEauthorrefmark{4}King Abdullah University of Science and Technology, Thuwal, Kingdom of Saudi Arabia\\
\IEEEauthorrefmark{5} Institute of Numerical Mathematics, Russian Academy of Sciences, Moscow, Russia\\
Email: newo@newo.su, aleksandr.mikhalev@kaust.edu.sa, \{pavser, gleb57\}@yandex-team.ru, i.oseledets@skoltech.ru}
}

\maketitle

\begin{abstract}

Cold start problem in Collaborative Filtering can be solved by asking new users to rate a small seed set of representative items or by asking representative users to rate a new item. The question is how to build a seed set that can give enough preference information for making good recommendations. One of the most successful approaches, called Representative Based Matrix Factorization, is based on Maxvol algorithm. Unfortunately, this approach has one important limitation --- a seed set of a particular size requires a rating matrix factorization of fixed rank that should coincide with that size. This is not necessarily optimal in the general case. In the current paper, we introduce a fast algorithm for an analytical generalization of this approach that we call Rectangular Maxvol. It allows the rank of factorization to be lower than the required size of the seed set. Moreover, the paper includes the theoretical analysis of the method's error, the complexity analysis of the existing methods and the comparison to the state-of-the-art approaches.
\end{abstract}

\IEEEpeerreviewmaketitle

\section{Introduction}

Collaborative Filtering (CF)~\cite{resnick1994grouplens} is one of the most widely used approaches to recommender systems. It is based on the analysis of users' previous activity (likes, watches, skips, etc.\ of items) and discovering hidden relations between users and items. Among CF methods, matrix factorization techniques~\cite{koren2008factorization, koren2009matrix} offer the most competitive performance \cite{cremonesi2010performance}. These models map users and items into a latent factor space which contains information about preferences of users w.r.t.\ items. 
Due to the fact that CF approaches use only user behavioural data for predictions, but not any domain-specific context of users/items, they cannot generate recommendations for new \textit{cold} users or \textit{cold} items which have no ratings so far.

A very common approach to solve this \textit{cold-start problem}~\cite{schein2002methods}, called \textit{rating elicitation}, is to explicitly ask cold users to rate a small representative \textit{seed set} of items or to ask a representative \textit{seed set} of users to rate a cold item \cite{golbandi2010bootstrapping, golbandi2011adaptive, liu2011wisdom}. One of the most successful approaches \cite{liu2011wisdom} to rating elicitation is based on the maximal-volume concept \cite{goreinov2001maximal}.
Its general intuition is that the most representative seed set should consist of the most representative and diverse latent vectors, i.e. they should have the largest length yet be as orthogonal as possible to each other. Formally, the degree to which these two requirements are met is measured by the volume of the parallelepiped spanned by these latent vectors. In matrix terms, the algorithm, called Maxvol \cite{goreinov2010find}, searches very efficiently for a submatrix of a factor matrix with the locally maximal determinant. Unfortunately, the determinant is defined only for square matrices, what means that a given fixed size of a seed set requires the same rank of the matrix factorization that may be not optimal. For example, the search for a sufficiently large seed set requires a relatively high rank of factorization, and hence a higher rank implies a larger number of the model parameters and a higher risk of overfitting, which, in turn, decreases the quality of recommendations.

To overcome the intrinsic ``squareness'' of the ordinary Maxvol, which is entirely based on the determinant, we use the notion of rectangular matrix volume, a generalization of the usual determinant. 
Searching a submatrix with high rectangular volume allows to use ranks of the factorization that are lower than the size of a seed set. However, the problem of searching for the globally optimal rectangular submatrix is NP-hard in the general case. In this paper, we propose a novel efficient algorithm, called Rectangular Maxvol, which generalizes original Maxvol.

It works in a greedy fashion and adds representative objects into a seed set one by one. This incremental update has low computational complexity that results in high algorithm efficiency. In this paper, we provide a detailed complexity analysis of the algorithm and its competitors and present a theoretical analysis of its error bounds. Moreover, as demonstrated by our experiments, the rectangular volume notion leads to a noticeable quality improvement of recommendations on popular recommender datasets. 

Let us briefly describe the organisation of the paper. Section~\ref{sec:background} describes the background on the existing methods for searching representatives that is required for further understanding. Sections \ref{sec:volume}, \ref{sec:algorithm} present our novel approach based on the notion of rectangular matrix volume and the fast algorithm to search for submatrices with submaximal volume. In Sections \ref{sec:rectmaxvol_complexity} and \ref{sec:bound}, we provide a theoretical analysis of the proposed method. Section \ref{sec:experiments} reports the results of our experiments conducted on several large-scale real world datasets. Section \ref{sec:related} overviews the existing literature related to CF, the cold start problem and the basic maximal-volume concept papers.

\section{Background and Framework}
\label{sec:background}

\subsection{Rating elicitation scheme}

The rating elicitation methods, such as \cite{golbandi2010bootstrapping, liu2011wisdom, anava2015budget, rashid2002getting},
 are based on the same common scheme, which is introduced in this section. Suppose we have a system that contains a history of users' ratings for items, where only a few items may be rated by a particular user. Denote the rating matrix by $R \in \mathbb{R}^{n \times m}$, where $n$ is the number of users and $m$ is the number of items, and the value of its entry $r_{ui}$ describes the feedback of user $u$ on item $i$. If the rating for pair $(u,i)$ is unknown, then $r_{ui}$ is set to~$0$.
Without loss of generality and due to the space limit, the following description of the methods is provided only for the user cold start problem. Without any modifications, these methods for the user cold start problem can be used to solve the item cold start problem after the transposition of matrix $R$.

Algorithm \ref{alg:elicitation} presents the general scheme of a rating elicitation method. Such procedures ask a cold user to rate a \textit{seed set} of representative items with indices $k \in \mathbb{N}^{L_0}$ for modeling his preference characteristics, where $L_0$, called \textit{budget}, is a parameter of the rating elicitation system.

\begin{algorithm}[h]
\caption{Rating elicitation for user cold start problem}
\label{alg:elicitation}
\begin{algorithmic}[1]
\REQUIRE Warm rating matrix $R \in \mathbb{R}^{n \times m}$, cold user, budget $L_0$
\ENSURE Predicted ratings of the cold user for all items
\STATE Compute indices $k \in \mathbb{N}^{L_0}$ of representative items that form a seed set
\STATE Elicit ratings $z' \in \mathbb{R}^{1\times L_0}$ of the cold user on items with indices $k$
\STATE Predict ratings of the cold user for all items $z \in\mathbb{R}^{1\times m}$ using $z'$
\RETURN $z$
\end{algorithmic}
\end{algorithm}

The performance of a rating elicitation procedure should be measured using a quality of predictions $z$. For this purpose, we use ranking measures (such as Precision@$k$), which are well suitable for CF task (see Section \ref{sec:experiments} for details).

The major contribution of this paper is a novel method of performing Step 1, described in Section \ref{sec:proposed_method}. It is based on PureSVD \cite{cremonesi2010performance} Collaborative Filtering technique, that is described in Section \ref{sec:puresvd}. In Section \ref{sec:predicting}, we discuss how to effectively perform Step 3 using the similar factorization based approach. And in Section \ref{sec:maximal}, we talk about the baseline method for seeking a seed set (Step 1), which is based on the maximal-volume concept.
\subsection{PureSVD}
\label{sec:puresvd}

Let us briefly describe the general idea of PureSVD, which is a very effective CF method in terms of ranking measures \cite{cremonesi2010performance} and therefore used as a basis of our rating elicitation approach. PureSVD provides a solution of the following optimization problem:
\begin{equation}
\label{eq:puresvd_optimization}
\Vert R-P^\top Q\Vert_F\to\min_{\substack{P\in\mathbb{R}^{f\times n}\\Q\in\mathbb{R}^{f\times m}}},
\end{equation}
where $||\cdot||_F$ is the Frobenius norm and $f$ is a parameter of PureSVD called rank. According to Eckart-Young theorem~\cite{golub2012matrix}, the optimal solution can be found by computing the truncated sparse Singular Value Decomposition of the sparse rating matrix $R$. 

This factorization can be interpreted as follows. Every user~$u$ has a low dimensional embedding $p_u \in \mathbb{R}^f$, a row in the matrix $P$, and every item has an embedding $q_i \in \mathbb{R}^f$, a column of the matrix $Q$. These embeddings are called \textit{latent vectors}~\cite{koren2008factorization}.
The PureSVD method provides an approximation~$\tilde r_{ui}$ of the unknown rating for a pair $(u,i)$, which is computed as the scalar product of the latent vectors: $$\tilde r_{ui} = p_u^\top  q_i.$$
Low-rank factors $P$ and $Q$ are used in the rating elicitation procedures that are described further.

\subsection{Predicting Ratings with a Seed Set}
\label{sec:predicting}

Let us assume that some algorithm has selected a seed set with $L_0$ representative items with indices $k \in \mathbb{N}^{L_0}$, and assume a cold user has been asked to rate only items~$k$, according to Steps 1-2 of the rating elicitation scheme described by Algorithm~\ref{alg:elicitation}. In this section, we explain how to perform Step 3, i.e. how to predict ratings $z$ for all items using only the ratings of the seed set.

As shown in \cite{anava2015budget}, the most accurate way to do it is to find a coefficient matrix $C\in\mathbb{R}^{L_0\times m}$ that allows to linearly approximate each item rating via ratings $z'$ of items from the seed set. Each column of $C$ contains the coefficients of the representation of an item rating via the ratings of the items from the seed set. Shortly, this approximation can be written in the following way:
\begin{equation*}
z\leftarrow z'C.
\end{equation*}
We highlight two different approaches to compute matrix C.

\subsubsection{Computing coefficients via the rating matrix} First approach is called Representative Based Matrix Factorization~(RBMF) \cite{liu2011wisdom}. It aims to solve the following optimization task:
\begin{equation}
\label{eq:matrix_decomp}
\Vert R-R(:,k) C\Vert_F \to \min_C.
\end{equation}
In our paper, we use the Matlab indexing notation\footnote{\url{http://www.mathworks.com/company/newsletters/articles/matrix-indexing-in-matlab.html}}: $R(:,k)$ is the matrix whose column $j$ coincides with the column $k_j$ of~$R$, where $k_j$ is the $j$th component of vector $k$. Note that $z'$ is not a part of $R(:,k)$, because there is still no information about a cold user ratings.
This optimization task corresponds to the following approximation:
\begin{equation}
\label{eq:r_coefs}
R\approx R(:,k)C.
\end{equation}
The solution of (\ref{eq:matrix_decomp}) is:
\begin{equation}
\label{eq:coef_matrix}
\begin{split}
C_R = (R(:,k)^\top  R&(:, k))^{-1}R(:,k)^\top R.
\end{split}
\end{equation}
Since $L_0 \ll n$, the matrix $R(:,k)$ is often well-conditioned. Therefore, the regularization term used in \cite{liu2011wisdom} is unnecessary and does not give a quality gain.

\begin{figure*}[t]
\begin{multicols}{2}
\hfill
\hspace{-4mm}
\epsfig{file=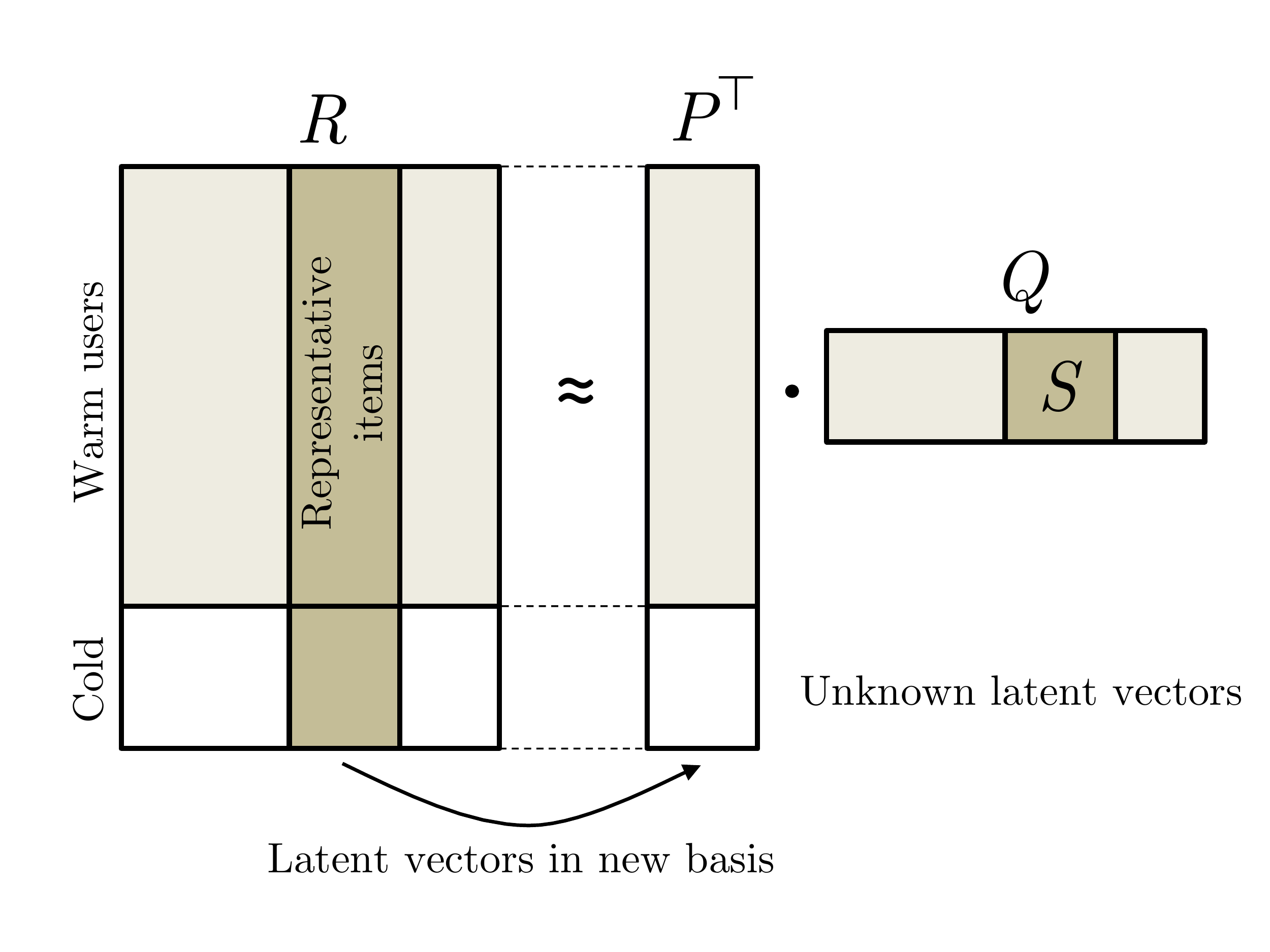, height=6cm}
\caption{Rating prediction using the seed set}
\label{fig:scheme}
\hfill
\epsfig{file=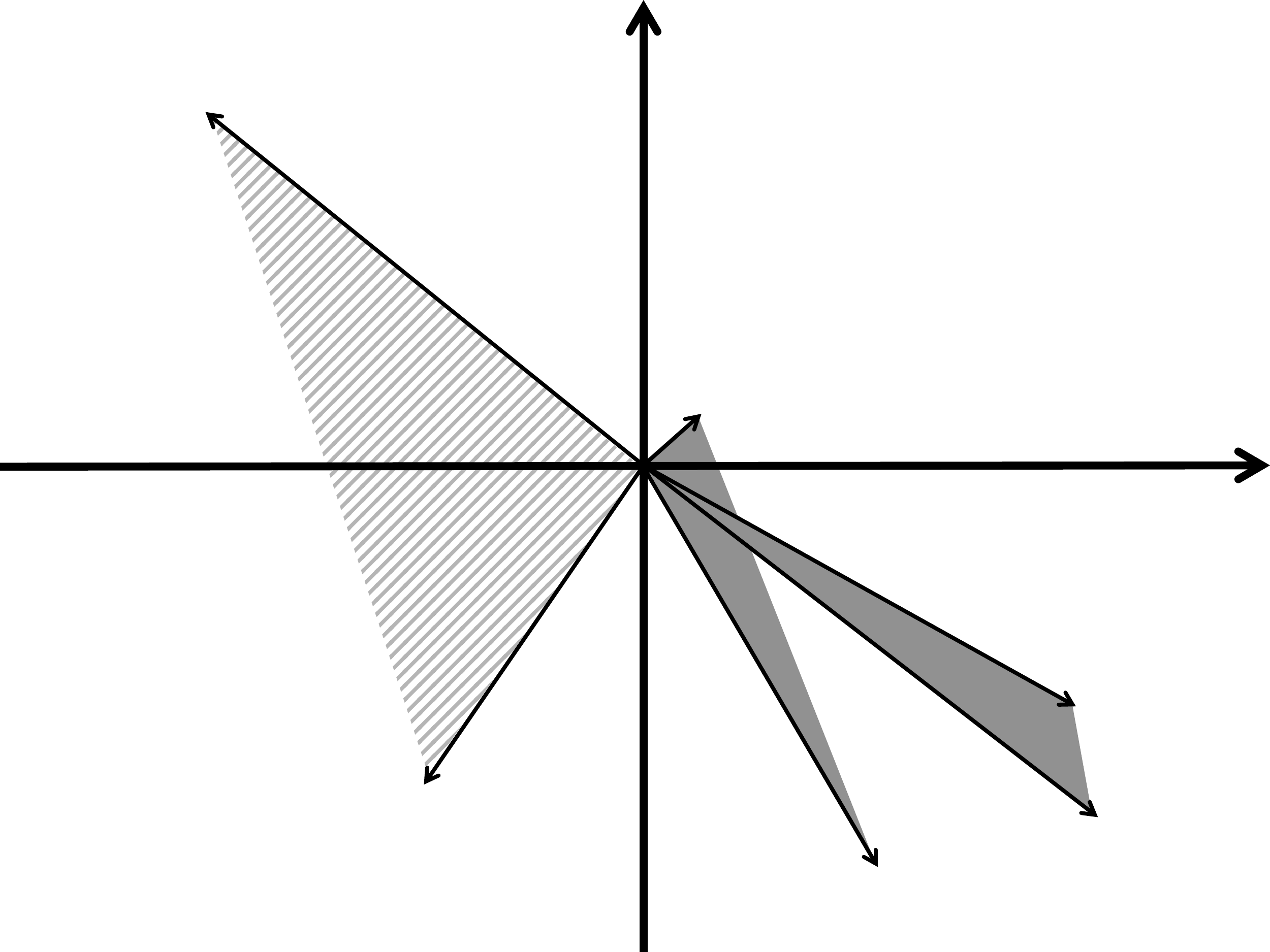, height=5.5cm}
\caption{An illustration of the intuition behind Maxvol for searching the seed set}
\label{fig:maxvol_demo}
\end{multicols}
\end{figure*}

\subsubsection{Computing coefficients via a low-rank factor} In this paper, we propose a more efficient second approach that considers the rank-$f$ factorization given by Equation (\ref{eq:puresvd_optimization}),~$f\leq L_0$. Let $Q(:,k) \in \mathbb{R}^{f \times L_0}$ be the matrix formed by $L_0$ columns of~$Q$ that correspond to the items of the seed set. Let us try to linearly recover all item latent vectors via the latent vectors from the seed set:
\begin{equation}
\label{eq:coef_factor_optimization}
\Vert Q - Q(:,k)C\Vert_F \to \min_C.
\end{equation}
It is a low-rank version of the problem given by (\ref{eq:matrix_decomp}) and, therefore, is computationally easier. Solution $C$ of this optimization problem can be also used for recovering all ratings using (\ref{eq:r_coefs}). 

Unlike (\ref{eq:matrix_decomp}), the optimization problem given by (\ref{eq:coef_factor_optimization}) does not have a unique solution $C$ in general case, because there are infinitely many ways to linearly represent an $f$-dimensional vector via more than $f$ other vectors. Therefore, we should find a solution of the underdetermined system of linear equations:
\begin{equation}
\label{eq:coef}
Q = SC,
\end{equation}
where we denote $S=Q(:, k)$. Since the seed set latent vectors surely contain some noise and coefficients in $C$ show how all item latent vectors depend on the seed set latent vectors, it is natural to find ``small'' $C$, because larger coefficients produce larger noise in predictions. We use the least-norm solution $C$ in our research, what is additionally theoretically motivated in Section \ref{sec:bound}. The least-norm solution of (\ref{eq:coef}) should be computed as follows:
\begin{equation}
\label{eq:coef_pseudo}
C=S^{\dagger}Q,
\end{equation}
where $S^{\dagger}=S^\top(SS^\top )^{-1}$ is the right pseudo-inverse of $S$.

Actually, such linear approach to rating recovering results in the following factorization model. 
Taking the latent vectors of the representative items $S$ as a new basis of the decomposition given by Equation (\ref{eq:puresvd_optimization}), we have
\begin{equation*}
\begin{split}
R \approx P^\top Q = P^\top SC = \left(P^\top Q(:,k)\right)C\approx\\
\approx R(:,k)C=F^\top C,\end{split}
\end{equation*}
where $F^\top =R(:,k)$. In this way, we approximate an unknown rating $r_{ui}$ by the corresponding entry of matrix $F^\top C$, where factor $F$ consists of the known ratings for the seed set items. This scheme is illustrated on Fig. \ref{fig:scheme}.

\subsection{Square Maxvol}
\label{sec:maximal}

This section introduces the general idea of the maximal-volume concept and Maxvol algorithm \cite{liu2011wisdom} for selecting a good seed set, what corresponds to Step 1 in the rating elicitation scheme (Algorithm \ref{alg:elicitation}).

Suppose we want to select $L_0$ representative items with indices $k \in \mathbb{N}^{L_0}$.
First of all, Maxvol algorithm requires to compute the rank-$L_0$ SVD factorization of $R$ given by Equation (\ref{eq:puresvd_optimization}). After this, searching for an item seed set is equivalent to searching for a  square submatrix $S =Q(:,k)\in \mathbb{R}^{L_0\times L_0}$ in the factor matrix $Q$. Note that every column of $S$ or $Q$  is a latent vector corresponding to an item from the seed set. 

An algorithm of seeking for a set of representative items may rely on the following intuitions. First, it should not select items, if they are not popular and thus cover preferences of only a small non-representative group of users. That means that the latent vectors from the seed set should have large norms. Second, the algorithm has to select diverse items that are relevant to different users with different tastes. This can be formalized as selecting latent vectors that are far from being collinear. The requirements can be met by searching for a subset of columns of $Q$ that maximizes the volume of the parallelepiped spanned by them. This intuition is demonstrated in Fig. \ref{fig:maxvol_demo},
which captures a two-dimensional latent space and three seed sets. The volume of each seed set is proportional to the area of the triangle built on the corresponding latent vectors. The dark grey triangles have small volumes (because they contain not diverse vectors or vectors with small length) and hence correspond to bad seed sets. Contrariwise, the light gray triangle has a large volume and represents a better seed set. 

Overall, we have the following optimization task:
\begin{equation}
\label{eq:optimize_det}
k \leftarrow \argmax_{k} \text{Vol}~S= \argmax_{k} |\det S|,\quad S=Q(:,k).
\end{equation}
The problem is NP-hard in the general case \cite{civril2007finding} and, therefore, suboptimal greedy procedures are usually applied. One of the most popular procedures is called Maxvol algorithm~\cite{goreinov2010find} and is based on searching for a \textit{dominant} submatrix $S\in\mathbb{R}^{L_0 \times L_0}$ of $Q$. The dominant property of $S$ means that all columns~$q_i\in\mathbb{R}^{L_0}$ of $Q$ can be represented via a linear combination of columns from $S$ with the coefficients not greater than 1 in modulus. Although, this property does not imply that $S$ has the maximal volume, it guarantees that $S$ is \textit{locally optimal}, what means that replacing any column of $S$ with a column of~$Q$, does not increase the volume \cite{goreinov2010find}.

At the initialization step, Maxvol takes $L_0$ linearly independent latent vectors that are the pivots from LU-decomposition~\cite{golub2012matrix} of matrix $Q$. Practice shows that this initialization usually provides a good initial approximation~$S$ to maximal volume matrix \cite{goreinov2010find}. After this, the algorithm iteratively swaps a ``bad'' latent vector inside the seed set with a~``good'' one out of it.
The procedure repeats until convergence. See~\cite{goreinov2010find} for more rigorous explanation of Maxvol algorithm.
In our paper, we also call this algorithm \textit{Square Maxvol}, because it seeks for a square submatrix (since determinant is defined only for square $S$). Furthermore, it is important to note that the original algorithm presented in \cite{goreinov2010find} has crucial speed optimizations for avoiding the expensive matrix multiplications and inversions, which are not presented in our paper due to the lack of space.

Let us analyse the complexity of Maxvol. The LU-decomposition with pivoting takes $O(mL_0^2)$ operations. The iterative updates take $O(\alpha mL_0)$ operations, where $\alpha$ is the number of iterations. Typically, $\alpha \le L_0$ iterations are needed. The overall complexity of Square Maxvol can be estimated as $O(mL_0^2)$. A more detailed complexity analysis of Square Maxvol is given in \cite{goreinov2010find}.

The obvious disadvantage of this approach to rating elicitation is the fixed size of the decomposition rank $f=L_0$, because the matrix determinant is defined only for square matrices. That makes it impossible to build a seed set with fixed size $L_0$ using an arbitrary rank of decomposition. However, as we further demonstrate in Section \ref{sec:experiments} with experiments, using our Rectangular Maxvol generalization with a decomposition of rank $f$ smaller than the size $L_0$ of the seed set could result in better accuracy of recommendations for cold users.

\section{Proposed Method}
\label{sec:proposed_method}

\subsection{Volume of Rectangular Matrices}
\label{sec:volume}

This section introduces a generalization of the maximal-volume concept to rectangular submatrices, which allows to overcome the intrinsic ``squareness'' of the ordinary maximal-volume concept, which is entirely based on the determinant of a square matrix.

Consider $S \in \mathbb{R}^{f \times L_0}$, $f \le L_0$. It is easy to see that the volume of a square matrix is equal to the product of its singular values. In the case of a rectangular matrix $S$, its volume \cite{rose1982linear} can be defined in a similar way:
$$\text{Rectvol}(S) := \prod_{s=1}^{L_0} \sigma_s = \sqrt{\det(SS^\top )}.$$
We call it \textit{rectangular volume}.
The simple intuition behind this definition is that it is the volume of the ellipsoid defined as the image of a unit sphere under the linear transformation defined by $S$:
$$\text{Rectvol}(S) = \text{Vol}~\{v \in \mathbb{R}^f : \exists c \in \mathbb{R}^L_0, \Vert c \Vert_2 \le 1 \,|\, v=Sc\}.$$
This can be verified using the singular value decomposition of~$S$ and the unitary invariance of the spectral norm. Moreover, in the case of a square matrix $S$, the rectangular volume is equal to the ordinary square volume:
$$\text{Rectvol}(S) = \sqrt{\det(SS^\top )}=|\det(S)|=\text{Vol}(S).$$
Note that, if $f > L_0$, then $\det SS^\top =0$.

Overall, searching for a seed set transforms to the following optimization task that is a generalization of Problem~(\ref{eq:optimize_det}): $k \leftarrow \argmax_{k}~\text{Rectvol}(S)$, where $S=Q(:,k)$. It is important to note that this maximization problem does not depend on the basis of the latent vectors from $S$.

The simplest method to find a suboptimal solution is to use a greedy algorithm that iteratively adds columns of $Q$ to the seed set. Unfortunately, the straightforward greedy optimization (trying to add each item to the current seed set and computing its rectangular volume) costs $O(mL_0^2f^2)$, that often is too expensive considering typical sizes of modern recommender datasets and number of model hyperparameters. Therefore, we developed a fast algorithm with complexity $O(mL_0^2)$ that is described in the following section.

\subsection{Algorithm}
\label{sec:algorithm}

In this section, we introduce an algorithm for the selection of $L_0$ representative items using the notion of rectangular volume. At the first step, the algorithm computes the best rank-$f$ approximation of the rating matrix $R$, PureSVD (see Section~\ref{sec:puresvd} for details), and selects $f$ representative items with the pivot indices from LU-decomposition of $Q$ or with Maxvol algorithm. This seed set is further expanded by Algorithm~\ref{alg:2maxvol} in a greedy fashion: by adding new representative items one by one maximizing rectangular volume of the seed set. Further, we show that new representative item should have the maximal norm of the coefficients that represent its latent vector by the latent vectors of the current seed set. The procedure of such norm maximization is faster than the straightforward approach. At the end of this section we describe the algorithm for even faster rank-1 updating norms of coefficients.

\subsubsection{Maximization of coefficients norm} Suppose, at some step, we have already selected $L<L_0$ representative items with the indices $k\in\mathbb{N}^L$. Let $S \in \mathbb{R}^{f\times L}$ be the corresponding submatrix of $Q\in \mathbb{R}^{f\times m}$. On the next step, the algorithm selects a column $q_i \subset Q,~ i \notin k$ and adds it to the seed set: $S \leftarrow \left[S,q_i\right],$
where $[A,B]$ is an operation of horizontal concatenation of two matrices $A$ and $B$. This column should maximize the following volume:
\begin{equation}
\label{eq:volume_update}
	q_i = \argmax_{i\notin k}\text{Rectvol}\left([S,q_i]\right).
\end{equation}

Suppose $C \in \mathbb{R}^{L \times m}$ is the current matrix of coefficients from Equation (\ref{eq:coef}), and let $c_i \in \mathbb{R}^L$ be an $i$-th column of matrix $C$. Then the updated seed set from (\ref{eq:volume_update}) can be written as following:
\begin{equation}\label{formula1}
[S,q_i]=[S,Sc_i]=S[I_L,c_i].
\end{equation}
Then the volume of the seed set can be written in the following way:
\begin{equation}
\begin{split}
\label{eq:rectvol_before_coef}
    \text{Rectvol}\left([S,q_i]\right) &= \sqrt{\det \left([S,q_i][S,q_i]^\top \right)}=\\
    &=\sqrt{\det \left(SS^\top +Sc_ic_i^\top S^\top \right)}.	
\end{split}
\end{equation}
Taking into account the identity $$\det(X+AB)=\det(X)\det(I+BX^{-1}A),$$ the volume (\ref{eq:rectvol_before_coef}) can be written as following:
\begin{equation}
\label{eq:rect_det}
\text{Rectvol}\left([S,q_i]\right)=\text{Rectvol}(S)\sqrt{1+w_i},
\end{equation}
where $w_i=\Vert c_i \Vert_2^2$. Thus, the maximization of rectangular volume is equivalent to the maximization of the $l_2$-norm of the coefficients vector $c_i$, which we know only after recomputing~(\ref{eq:coef_pseudo}). Total recomputing of coefficient matrix $C$ on each iteration is faster than the straightforward approach described in Section \ref{sec:volume} and costs $O(mL_0^2fm)$. However, in the next section, we describe even faster algorithm with an efficient recomputation of the coefficients.

\subsubsection{Fast Computation of Coefficients} Since the matrix of coefficients $C$ is the least-norm solution \eqref{eq:coef_pseudo}, after adding column $q_i$ to the seed set, $C$ should be computed using Equation (\ref{formula1}):
\begin{equation}
\label{eq:recomputing_coef}
C \leftarrow [S,q_i]^\dagger Q=[I_L,c_i]^\dagger S^\dagger Q=[I_L, c_i]^\dagger C.
\end{equation}
The pseudoinverse from (\ref{eq:recomputing_coef}) can be obtained in this way:
\begin{equation*}
\begin{split}
	[I_L,c_i]^\dagger=[I_L, c_i&]^\top \left([I_L,c_i][I_L,c_i]^\top \right)^{-1}=\\
	&=\begin{bmatrix}I_L\\ c_i^\top \end{bmatrix}\left(I_L+c_ic_i^\top \right)^{-1},
\end{split}
\end{equation*}
where $\begin{bmatrix}A\\B\end{bmatrix}$ is an operation of vectical concatenation of $A$ and $B$.
The inversion in this formula can be computed by the Sherman-Morrison formula:
\begin{equation*}
\left(I_L+c_ic_i^\top \right)^{-1} = I_L-\frac{c_ic_i^\top }{1+c_i^\top c_i}.
\end{equation*}
Putting it into (\ref{eq:recomputing_coef}), we finally get the main update formula for $C$:
\begin{equation}
\label{eq:newC}
C \leftarrow \begin{bmatrix}I_L-\frac{c_ic_i^\top }{1+c_i^\top c_i}\\
\vspace{-2mm}\\ c_i^\top -\frac{c_i^\top c_ic_i^\top }{1+c_i^\top c_i}\end{bmatrix}\cdot C=\begin{bmatrix}C-\frac{c_ic_i^\top C}{1+c_i^\top c_i}\\
\vspace{-2mm}\\ \frac{c_i^\top C}{1+c_i^\top c_i}\end{bmatrix}.
\end{equation}

Recall that we should efficiently recompute norms of coefficients $w_i$. Using Equation (\ref{eq:newC}), we arrive at the following formula for the update of all norms $w_j$:
\begin{equation}
\label{eq:length_update}
w_j \leftarrow w_j-\frac{(c_i^\top c_j)^2}{1+c_i^\top c_i}.
\end{equation}
It is natural to see that coefficients norms are decreasing, because adding each new latent vector to the seed set gives more flexibility of representing all latent vectors via representative ones.

Equations (\ref{eq:newC}) and (\ref{eq:length_update}) allow to recompute $C$ and $W$ using the simple rank-$1$ update. Thus, the complexity of adding a new column into the seed set is low, what is shown in Section~\ref{sec:rectmaxvol_complexity}. The pseudocode of the algorithm is provided in Algorithm~\ref{alg:2maxvol}.

\begin{algorithm}[h]
\caption{Searching representative items using Rectangular Maxvol}
\label{alg:2maxvol}
\begin{algorithmic}[1]
\REQUIRE Rating matrix $R \in \mathbb{R}^{n \times m}$, number of representative items $L_0$, rank of decomposition $f \le L_0$
\ENSURE Indices $k\in \mathbb{N}^{L_0}$ of $L_0$ representative items
\STATE Compute rank-$f$ PureSVD of the matrix $R\approx P^\top  Q$
\STATE Get the initial square seed set: $k \leftarrow$ $L_0$ pivot indices from LU-decomposition of $Q$
\STATE $S \leftarrow Q(:,k)$
\STATE $C = S^{-1}Q$
\STATE $\forall i:\: w_i \leftarrow \Vert c_i \Vert_2^2$, where $c_i$ is the $i$-th column of $C$
\WHILE{$|k| < L_0$}
\STATE $i \leftarrow \mathrm{argmax_{i \notin k}}(w_i)$
\STATE $k \leftarrow [k, i]$
\STATE $S \leftarrow [S,q_i]$
\STATE $C \leftarrow \begin{bmatrix}C-\frac{c_ic_i^\top C}{1+c_i^\top c_i}\\ \vspace{-2mm}\\ \frac{c_i^\top C}{1+c_i^\top c_i}\end{bmatrix}$
\STATE $\forall j:\: w_j \leftarrow w_j-\frac{(c_i^\top c_j)^2}{1+c_i^\top c_i}$
\ENDWHILE
\RETURN $k$
\end{algorithmic}
\end{algorithm}

The seed sets provided by the algorithm can be used for rating elicitation and further prediction of ratings for the rest of the items, as demonstrated in Section \ref{sec:predicting}. Moreover, if the size of the seed set $L_0$ is not limited by a fixed budget, alternative stopping criteria is proposed in Section \ref{sec:bound}.

\subsection{Compelexity analysis}
\label{sec:rectmaxvol_complexity}

The proposed algorithm has two general steps: the initialization (Steps 1--5) and the iterative addition of columns or rows into the seed set (Steps 6--12). The initialization step corresponds to the LU-decomposition or Square Maxvol, which have $O(mf^2)$ complexity. Addition of one element into the seed set (Steps 7--11) requires the recomputation of the coefficients $C$ (Step 10) and lengths of coefficient vectors (Step 11). The recomputation (Step 10) requires a rank-$1$ update of the coefficients matrix $C\in \mathbb{R}^{L \times m}$ and the multiplication $c_i^\top C$, where $c_i \in \mathbb{R}^{L}$ is a column of $C$. The complexity of each of the two operations is $O(Lm)$, so the total complexity of one iteration (Steps 7--11) is $O(Lm)$. Since this procedure is iterated over $L \in \{f, ..., L_0\}$, the complexity of the loop (Step 6) is equal to $O(m(L_0^2 - f^2))$. So, in total, the complexity of Algorithm~\ref{alg:2maxvol} is $O(mL_0^2)$.

\subsection{Error Estimate}
\label{sec:bound}

\subsubsection{Analysis of Error} In this section, we theoretically analyse the estimation error of our method proposed in Section~\ref{sec:algorithm}. According to Section \ref{sec:puresvd} we have a low-rank approximation of the rating matrix $$R = P^\top  Q + \mathcal{E},$$ where $\mathcal{E} \in \mathbb{R}^{n \times m}$ is a random error matrix. On the other hand, we have RBMF approximation~(\ref{eq:r_coefs}). Let us represent its error via $\mathcal{E}$.

First of all, we have $$R(:, k) = P^\top  Q(:,k) + \mathcal{E}(:, k) = P^\top  S + \mathcal{E}(:, k).$$ Since $C=S^{\dagger}Q$ (see Section \ref{sec:predicting} for details), the RBMF approximation of $R$ can be written in the following form: $$R(:,k)C=P^\top  SS^{\dagger}Q + \mathcal{E}(:,k)C=R -\mathcal{E}+ \mathcal{E}(:,k)C,$$
what means
$$R=R(:,k)C+\mathcal{E}-\mathcal{E}(:,k)C.$$
The smaller in modulus the noise terms are, the better approximation of $R$ we have. It means that we are interested in the small values of the matrix $C$, such as the least-norm solution of (\ref{eq:coef_factor_optimization}). Further, we prove a theorem providing an approximated bound for the maximal length of $c_i$. 

\subsubsection{Upper Bound of Coefficients Norm} Similarly to Square Maxvol algorithm, a rectangular submatrix is called \textit{dominant}, if its rectangular volume does not increase by replacing one row with another one from the source matrix.

\newtheorem{theorem}{Theorem}
\begin{theorem}
\label{theorem}
Let $Q \in \mathbb{R}^{f \times m}$ be a matrix of rank $f$. Assume~$k \in \mathbb{N}^{L_0}$ is a vector of seed set element indices that produces rank-$f$ dominant submatrix of $S = Q(:,k)$, where~$S \in \mathbb{R}^{f \times L_0}$ and $m \ge L_0 \ge f$. Let $C$ be a matrix of least-norm coefficients~$C\in\mathbb{R}^{L_0\times m}$, such that $Q = SC$. Then $l_2$-norm of a column $c_i$ of $C$ for $i$ not from the seed set is bounded as: 
$$\Vert c_i \Vert_2 \le \sqrt{\frac{f}{L_0+1-f}}, \quad i \notin k.$$
\end{theorem}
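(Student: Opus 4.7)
The plan is to introduce the augmented matrix $B := [S,\,q_i] \in \mathbb{R}^{f \times (L_0+1)}$ and study the orthogonal projector $\Pi := B^\top(BB^\top)^{-1}B \in \mathbb{R}^{(L_0+1)\times(L_0+1)}$ onto its row space. Because $S$ already has rank $f$, so does $B$, hence $\operatorname{tr}(\Pi) = f$. The strategy is to combine this trace identity with the dominance hypothesis---which I will rewrite as a lower bound on the leverage scores $\Pi_{jj}$ for $j \le L_0$---to control the last diagonal entry $\Pi_{L_0+1,L_0+1}$, and then identify that entry with an explicit function of $\|c_i\|_2$ via Sherman--Morrison.

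For the first step I would apply the matrix-determinant lemma to the rank-one splitting $BB^\top = B_{-j}B_{-j}^\top + b_jb_j^\top$, where $B_{-j}$ denotes $B$ with its $j$-th column removed. This yields
$$\det(B_{-j}B_{-j}^\top) \;=\; \det(BB^\top)\,(1-\Pi_{jj}),$$
so $\text{Rectvol}(B_{-j})^2 = \det(BB^\top)(1-\Pi_{jj})$. Removing the last column of $B$ returns $S$ itself, whereas removing column $j \le L_0$ yields, up to a harmless column reordering, exactly the matrix obtained from $S$ by swapping its $j$-th column with $q_i$. The dominance assumption then forces $\text{Rectvol}(B_{-j}) \le \text{Rectvol}(S)$ for every $j \le L_0$, which is equivalent to $\Pi_{jj} \ge \Pi_{L_0+1,L_0+1}$. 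Summing these inequalities over $j \in \{1,\dots,L_0\}$ and invoking $\operatorname{tr}(\Pi) = f$ gives $(L_0+1)\,\Pi_{L_0+1,L_0+1} \le f$.

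The second step identifies $\Pi_{L_0+1,L_0+1}$ with a simple function of $c_i$. Since $BB^\top = SS^\top + q_iq_i^\top$, the Sherman--Morrison formula (already used in Equation~(\ref{eq:newC})) delivers
$$\Pi_{L_0+1,L_0+1} \;=\; q_i^\top (BB^\top)^{-1} q_i \;=\; \frac{\alpha}{1+\alpha}, \qquad \alpha := q_i^\top (SS^\top)^{-1} q_i,$$
and because $c_i = S^\top (SS^\top)^{-1} q_i$ is the least-norm coefficient vector one has $\alpha = \|c_i\|_2^2$. Substituting into $\Pi_{L_0+1,L_0+1} \le f/(L_0+1)$ and clearing denominators yields the claimed $\|c_i\|_2^2 \le f/(L_0+1-f)$.

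The delicate point throughout is the bookkeeping of the dominance property: the paper phrases it in terms of rows, but since $Q$ stores latent vectors columnwise, it must be read as saying that no column swap $s_j \leftrightarrow q_i$ increases the rectangular volume of $S$. Once that convention is fixed, the remaining work---trace identity, matrix-determinant lemma, Sherman--Morrison---is routine and slots together cleanly.
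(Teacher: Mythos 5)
Your proof is correct, and every step checks out: the trace identity $\operatorname{tr}(\Pi)=f$, the matrix-determinant-lemma computation $\det(B_{-j}B_{-j}^\top)=\det(BB^\top)(1-\Pi_{jj})$, the translation of dominance into $\Pi_{jj}\ge\Pi_{L_0+1,L_0+1}$, and the Sherman--Morrison evaluation $\Pi_{L_0+1,L_0+1}=\|c_i\|_2^2/(1+\|c_i\|_2^2)$ with $\|c_i\|_2^2=q_i^\top(SS^\top)^{-1}q_i$ are all right, and they assemble into exactly the claimed bound. The route is recognizably parallel to the paper's but instrumented differently. The paper proves a standalone lemma via Cauchy--Binet, namely $\sum_{j}\det(A_{-j}B_{-j})=(M-N)\det(AB)$, uses dominance to say the maximal deleted-column determinant is $\det(SS^\top)$, and then converts the resulting determinant ratio into $1+\|c_i\|_2^2$ via the volume-update formula of Equation~(\ref{eq:rect_det}). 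Your argument replaces Cauchy--Binet with the elementary fact that an orthogonal projector of rank $f$ has trace $f$ (which, combined with the determinant lemma, is equivalent to the paper's identity in the symmetric case $B=A^\top$), and replaces the volume-update formula with a direct leverage-score computation. What your version buys is a more self-contained and arguably more transparent argument --- no Cauchy--Binet, and the quantity being bounded ($f/(L_0+1)$ as an average leverage score) has a clean statistical interpretation; what the paper's version buys is a reusable lemma stated for general rectangular $A,B$ and a proof that leans directly on machinery (Equation~(\ref{eq:rect_det})) already developed for the algorithm. Your remark about reading ``row'' in the dominance definition as ``column'' given the paper's orientation of $Q$ is the right call and matches how the paper actually uses the hypothesis.
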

\begin{proof}
Since $S$ is a dominant submatrix of the matrix $Q$, it has the maximal rectangular volume among all possible submatrices of $[S,q_i]$ with the shape $f\times L_0$. Therefore, applying Lemma~\ref{lemma} to the matrix $[S,q_i]$, we get
\begin{equation*}
\det\left([S,q_i][S,q_i]^\top \right) \le \frac{L_0+1}{L_0+1-f}\det(SS^\top ).
\end{equation*}
Using Equation (\ref{eq:rect_det}), we get:
\begin{equation*}
\begin{split}
	\Vert c_i \Vert_2^2 &= \frac{\det\left([S,q_i][S,q_i]^\top \right)}{\det(SS^\top )}-1\le \frac{f}{L_0+1-f},
\end{split}
\end{equation*}
what finishes the proof.
\end{proof}

The similar theoretical result was obtained in \cite{de2007subset},  However our proof seems to be much closely related to the notation used in our paper and in the proposed algorithm.

Theorem \ref{theorem} demonstrates that if we have an existing decomposition with the fixed rank $f$ and the size of the seed set~$L_0$ is not limited by a fixed budget it is enough to take~$L_0=2f$ items to the seed set for getting all coefficients norm less than 1. This condition of representativeness has a very natural geometric meaning: all item latent vectors are inside the ellipsoid spanned by the latent vectors from the seed set. The numerical experiments with randomly generated~$f\times m$ matrices have shown, that Algorithm \ref{alg:2maxvol} requires only~$L_0 \approx 1.2f$ rows to reach upper bound 2 for the length of each row of $C$ and only $L_0 \approx 2f$ to reach the upper bound 1 for the length of each row of $C$. So, although, our algorithm does not guarantee that the seed set submatrix is dominant, the experiment results are fully consistent with the theory.

Further, we prove the supporting lemma.

\newtheorem{lemma}{Lemma}
\begin{lemma}
\label{lemma}
Let $A \in \mathbb{R}^{N \times M}$ and $B \in \mathbb{R}^{M \times N}, M > N$.
Let~$A_{-i}$ be $N \times (M-1)$ submatrix of $A$ without $i$-th column
and~$B_{-i}$ be $(M-1) \times N$ submatrix of $B$ without $i$-th row.
Then,
$$\det(AB)\le\frac{M}{M-N}\max_{i}\left(\det(A_{-i}B_{-i})\right)$$
\end{lemma}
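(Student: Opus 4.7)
The plan is to prove the identity
\[
\sum_{i=1}^{M} \det(A_{-i}B_{-i}) = (M-N)\det(AB)
\]
via the Cauchy--Binet formula, and then bound the sum by $M$ times its maximum term.

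First I would expand $\det(AB)$ using Cauchy--Binet. Writing $A_J$ for the $N\times N$ submatrix of $A$ consisting of the columns indexed by $J \subset \{1,\dots,M\}$ with $|J|=N$, and $B_J$ for the corresponding $N\times N$ submatrix of $B$ formed by the rows indexed by $J$, we have
\[
\det(AB) = \sum_{|J|=N} \det(A_J)\det(B_J).
\]
Applied to the submatrices with the $i$-th column/row removed, the same formula yields
\[
\det(A_{-i}B_{-i}) = \sum_{\substack{|J|=N\\ i\notin J}} \det(A_J)\det(B_J),
\]
since the admissible column indices for $A_{-i}$ are exactly those $N$-subsets of $\{1,\dots,M\}$ that avoid~$i$.

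The key combinatorial step is to sum the latter over $i=1,\dots,M$ and swap the order of summation. Each fixed subset $J$ of size $N$ is counted once for every index $i \notin J$, i.e.\ exactly $M-N$ times. Hence
\[
\sum_{i=1}^{M} \det(A_{-i}B_{-i}) = (M-N)\sum_{|J|=N}\det(A_J)\det(B_J) = (M-N)\det(AB).
\]
Solving for $\det(AB)$ and bounding the sum on the right by $M$ times its maximum entry gives
\[
\det(AB) = \frac{1}{M-N}\sum_{i=1}^M \det(A_{-i}B_{-i}) \le \frac{M}{M-N}\max_i\det(A_{-i}B_{-i}),
\]
which is the claimed inequality.

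The only subtlety I would flag is the last step: replacing a sum by $M$ times its maximum requires the individual terms $\det(A_{-i}B_{-i})$ to be nonnegative. In the use of the lemma within the theorem, $B = A^\top$ with $A = [S,q_i]$, so each $A_{-i}B_{-i} = A_{-i}A_{-i}^\top$ is a Gram matrix with nonnegative determinant (in fact each Cauchy--Binet summand becomes $\det(A_J)^2$), and the bound is rigorous. This sign issue is the only nontrivial point; the rest is a clean double-counting argument built directly on top of Cauchy--Binet.
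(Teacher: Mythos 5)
Your proof follows essentially the same route as the paper's: a Cauchy--Binet expansion, the double-counting identity $\sum_{i=1}^{M}\det(A_{-i}B_{-i})=(M-N)\det(AB)$, and then bounding the sum by $M$ times its maximum term. Your closing remark about needing the terms $\det(A_{-i}B_{-i})$ to be nonnegative is a genuine subtlety that the paper leaves implicit (it holds in the application since $B=A^{\top}$), so your version is, if anything, slightly more careful.
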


\begin{proof}
From the Cauchy-Binet formula we get $$\det(AB) = \sum_{k} \det A(:,k) \cdot \det B(k,:),$$ where $k\in\mathbb{N}^N$ is a vector of $N$ different indices. Since $A_{-i}$ contains all columns of $A$ except $i$-th column, then $A(:,k)$ is a submatrix of $A_{-i}$ for any $i \notin k$. Since $k$ consists of~$N$ different numbers, we have $M-N$ different $i$, such that~$A(:,k)$ is a submatrix of $A_{-i}$. The same is true for the matrix $B$. So get $$\sum_{i=1}^M \det(A_{-i}B_{-i}) = (M-N)\det(AB)$$ applying Cauchy-Binet formula to each summand. Therefore, $$\det(AB) = \frac{1}{M-N}\sum_{i=1}^M\det(A_{-i}B_{-i}),$$ what finishes the proof. 
\end{proof}

\section{Experiments}
\label{sec:experiments}

The proposed experiments\footnote{The source code is available here: \url{https://bitbucket.org/muxas/rectmaxvol_recommender}} compare two algorithms: Square Maxvol based (our primary baseline) and Rectangular Maxvol based (Section \ref{sec:proposed_method}). Other competitors have either an infeasible computational complexity (see Section \ref{sec:related} for details) or have a lower quality than our baseline, as it is shown in \cite{liu2011wisdom} (we reproduced the conclusions from \cite{liu2011wisdom} but they are not demonstrated here due to the lack of space). Moreover, it is important to note that the experiments in \cite{liu2011wisdom} used smaller versions of the datasets. Therefore, the performance of Square Maxvol on the extended datasets is different from that reported in \cite{liu2011wisdom}.

\subsection{Datasets.}
We used two popular publicly available datasets in our experiments. T first one is the Movielens dataset\footnote{\url{http://grouplens.org/datasets/movielens/}} which contains 20,000,263 ratings of 26,744 movies from 138,493 users. The analysis of the older and smaller version of this dataset is provided in~\cite{miller2003movielens}.
The second one is the Netflix dataset\footnote{\url{http://www.netflixprize.com/}}. It contains 100,480,507 ratings of 17,770 movies from 480,189 users. The description of the dataset and the competition can be found in \cite{bennett2007netflix}.
The rating matrix $R$ was formed in the same way as in \cite{liu2011wisdom}.

\subsection{Evaluation Protocol.}
Our evaluation pipeline for the comparison of the rating elicitation algorithms is similar to the one introduced in \cite{liu2011wisdom}. All our experiments are provided for both the user and the item cold start problems. However, without loss of generality, this section describes the evaluation protocol for the user cold start problem only. The item cold start problem can be evaluated in the same way after the transposition of the rating matrix.

We evaluate the algorithms for selecting representatives by the assessing the quality of the recommendations recovered after the acquisition of the actual ratings of the representatives, what can be done as shown in Section \ref{sec:predicting}. Note that users may not have ratings for the items from the seed set: if user $u$ was asked to rate item $i$ with unknown rating, then, according to PureSVD model, $r_{ui}$ is set to $0$.
In case of the user cold start problem, all users are randomly divided into 5 folds of equal size, and the experiments are repeated 5 times, assuming that one part is a test set with cold users and the other four parts form the train set and the validation set contain warm users. Analogically, in case of the item cold start, all items were divided into 5 folds.
	
Pointwise quality measures are easy to be optimized directly, but they are not very suitable for recommendation quality evaluation, because the goal of a recommender system is not to predict particular rating values, but to predict the most relevant recommendations that should be shown to the user. That is why, we use ranking measures to evaluate all methods~\cite{gunawardana2009survey}. For evaluation, we divided all items for every user into relevant and irrelevant ones, as it was done in the baseline paper \cite{liu2011wisdom}.

One of the most popular and interpretable ranking measures for the recommender systems evaluation are Precision@$k$ and Recall@$k$ \cite{cremonesi2010performance} that measure the quality of top-$k$ recommendations in terms of their relevance. More formally, Precision@$k$ is the fraction of relevant items among the top-$k$ recommendations. Recall@$k$ is the fraction of relevant items from the top $k$ among all relevant items. Our final evaluation measures were computed by averaging Precision@$k$ and Recall@$k$ over all users in the test set.
Note that in the case of the item cold start problem, Precision@$k$ and Recall@$k$ are computed on the transposed rating matrix $R$. Moreover, following the methodology from \cite{liu2011wisdom}, we compare algorithms in terms of coverage and diversity.

\subsection{Results of Experiments.}
As we mentioned in Section~\ref{sec:predicting}, there are two different ways to compute the coefficients for representing the hidden ratings via the ratings from a seed set. The first one is to compute them via the low-rank factors, as shown in Equation (\ref{eq:coef_pseudo}). The second one is to compute them via the source rating matrix~$R$, as shown in Equation (\ref{eq:coef_matrix}). Our experiments show that the second approach demonstrates the significantly better quality. Therefore, we use this method in all our experiments.

We processed experiments for the seed set sizes from 5 to~100 with a step of 5. These computations become possible for such dense grid of parameters, because of the high computational efficiency of our algorithm (see Section \ref{sec:related}). The average computational time of Rectangular Maxvol on the datasets is $1.12$ seconds (Intel Xeon CPU 2.00GHz, 256Gb RAM). The average computational time of Square Maxvol is almost the same what confirms the theoretical complexity analysis.

In the case of Rectangular Maxvol, for every size of the seed set, we used the rank that gives the best performance on a separate evaluation set. Fig.~\ref{fig:major} demonstrates the superiority of our approach over the ordinal Square Maxvol for all cold start problems types (user and item) and for both datasets. Moreover, it can be seen from the magnitudes of the differences that Rectangular Maxvol gives much more stable results that the square one. The same conclusions can be made for any combination of Precision/Recall, $k$ and seed set sizes, but they are not demonstrated here due to the lack of space.

As mentioned above, Rectangular Maxvol used the optimal rank value in our experiments. Fig.~\ref{fig:opt_rank} demonstrates the averaged optimal rank over all experiments for all datasets and for all cold start problem types. It is easy to see that, in each case, the required optimal rank is significantly smaller than the corresponding size of the seed set. This unequivocally confirms that the rectangular generalization of the square maximal-volume concept makes a great sense. Moreover, since Rectangular Maxvol requires a smaller rank of the rating matrix factorization, it is more computationally and memory efficient.

\begin{figure*}[!htbp]
\begin{multicols}{2}
\hfill
\hspace{-9mm}
{\resizebox{0.5\textwidth}{!}{\input{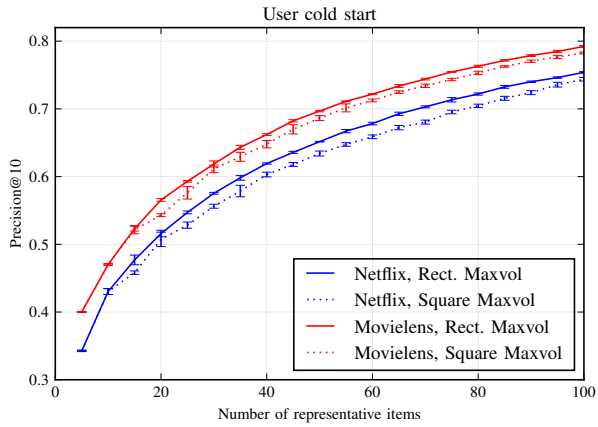}}
\label{fig:1}
\hfill
\hspace{-6mm}
\resizebox{0.5\textwidth}{!}{\input{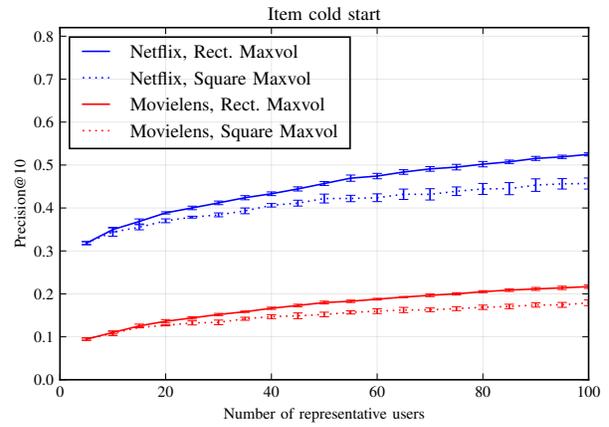}}
\label{fig:2}
\hfill
}
\hfill
\end{multicols}
\caption{Precision@10 dependence on the size of the seed set. The comparison of Square Maxvol and Rectangular Maxvol. The errorbars indicate $\sigma$ deviation.}
\label{fig:major}
\end{figure*}

\begin{figure*}[!htbp]
\begin{multicols}{2}
\hfill
\hspace{-9mm}
\resizebox{0.5\textwidth}{!}{\input{figures/paper_rank_size.pgf}}
\caption{Optimal rank dependence on the size of the seed set.}
\hfill
\label{fig:opt_rank}
\hspace{-7mm}
\resizebox{0.5\textwidth}{!}{\input{figures/coverage_diversity.pgf}}
\caption{Coverage and diversity of the Netflix seed set items.}
\hfill
\label{fig:coverage_diversity}
\end{multicols}
\end{figure*}

On Fig. \ref{fig:coverage_diversity}, we can see that the coverage and diversity measures \cite{liu2011wisdom} of the representative Netflix items selected by Rectangular Maxvol are higher than the measures of Square Maxvol. The cases of representative users and Movielens dataset lead to the same results, but the corresponding figures are not demonstrated here due to the lack of space.

In the end, it is interesting to analyse the behaviour of the automatic stopping criterion that adds objects into the seed set until all latent vectors are covered by the ellipsoid spanned by the latent vectors of the representatives. The experiments show that increasing the rank results in a quality fall in the case of representative users and the ranks higher than 50, what means an overfitting of PureSVD. In case of the representative items, the quality becomes almost constant starting from the same ranks.

\section{Related work}
\label{sec:related}

\subsection{Collaborative Filtering.} 
Conventional CF methods do not analyse any domain-specific context of users/items~\cite{schafer2007collaborative}, such as explicit user and item profiles, items' text descriptions or social relations between users. Therefore, they are domain- and data-independent and can be applied to a wide range of tasks, what is their major advantage. As shown in~\cite{cremonesi2010performance}, CF approaches based on a factorization have high accuracy for the majority of datasets. While a particular choice of a factorization algorithm is not essential for our approach to the cold start problem, our methodology is based on the PureSVD, which performs better than other popular methods such as SVD++ \cite{cremonesi2010performance}.

\subsection{Scoring Rating Elicitation Methods.} 
The simplest methods for the seed set selection rank users or items by some ad-hoc score which shows how representative they are and take the top-$k$ ranked entities as a seed set~\cite{rashid2002getting, rashid2008learning, zhang2014less, zhang2013semi}.
An obvious drawback of such methods that is avoided in our approach is that these elements are taken from the seed set independently and diversity of the selected elements is limited~\cite{golbandi2010bootstrapping}. Further in this section, we overview the methods that aim on a selection of a diverse seed set and that have better performance. This is why we do not use the scoring methods in our experiments.

\subsection{GreedyExtend.}
Among them, the most straightforward method is the GreedyExtend approach \cite{golbandi2010bootstrapping}.
Unfortunately, the brute force manner of GreedyExtend implies very high computational costs. Hence, it is hardly scalable, in contrast to the approaches that are empirically compared in this paper. This method greedily adds the item $i$ to the current seed set of indices~$k\in \mathbb{N}^L$ that maximizes the target quality measure. The search of the best $i$ is computed in a brute force manner, i.e. the algorithm iteratively adds the best item into the seed set:~$k \leftarrow [k, i]$, where $i=\argmin_{i' \notin k}\mathcal F([k,i'])$ and $\mathcal{F}([k,i'])$ is the quality measure of recommendations generated using the seed set indices $[k,i']$. The authors of this method reported the results only for an approach that uses similarities of items to predict the ratings via the seed set ratings. More effective~\cite{anava2015budget} linear approach described in Section \ref{sec:predicting} costs~$O(Lnm)$, where~$L=|k|$. At each step, the least squares solution is computed for almost all items, i.e.~$O(m)$ times. Since the algorithm has $L_0$ such steps, the total complexity is~$O(L_0^2nm^2)$ (more than $10^{16}$ operations for the Netflix dataset and the seed set size $L_0=10$). Therefore, we do not use this method in our experiments.

\subsection{Backward Greedy Selection.}
Another class of methods of searching for diverse representatives is based on the factorization of the rating matrix. Since the selection of user or item representatives is equivalent to selecting a submatrix of the corresponding factor, these algorithms seek for the submatrix that maximizes some criterion. One such approach, called Backward Greedy Selection~\cite{anava2015budget}, solves only the item cold start problem, but not the user one. This method is based on the techniques for transductive experimental design introduced in \cite{yu2006active}. To get the seed set, it greedily removes users from a source user set in order to get a good seed set minimizing the value $\text{Trace}\left((SS^\top )^{-1}\right),$ where~$S \in \mathbb{R}^{f \times L}$ is a submatrix in the items' factor~$Q \in \mathbb{R}^{f \times m}$ of a rank-$f$ decomposition. Each deletion of an item requires iterative look up of all the items in the data, where each iteration costs $O(f^2L)$. So, one deletion takes $O(f^2Lm)$ operations. Assuming that $L_0\ll m$, the whole procedure takes~$O(f^2m^3)$ operations, which is too expensive to be computed on real world datasets (the authors have selected a small subset of users to perform their evaluation). Therefore, we do not use this method in our experiments.

\subsection{Representative Based Matrix Factorization.}
The method presented in \cite{liu2011wisdom}, called Representative Based Matrix Factorization (RBMF), takes the diversity into account as well. It uses maximal-volume concept and the Maxvol algorithm \cite{goreinov2010find} for searching the most representative rows or columns in the factors of a CF factorization. This approach is highly efficient and more accurate than all ad-hoc competitors, but it also has one important limitation. It must use the same rank of factorization as the desired number of representative users or items for the seed set. The algorithm proposed in our paper is a generalization of Maxvol that allows to use different rank values. It often leads to a better recommendation accuracy, as shown in Section \ref{sec:experiments}.

\subsection{Complexity analysis.}
Let us overview the computational complexity of the proposed Rectangular Maxvol and its competitors. Some of these methods use low-rank factorizations of the matrix, whose detailed complexity analysis is provided in~\cite{golub2012matrix}. However, as this is not a key point of our work, we neglect the computational cost of factorizations in the further analysis, because  it is same for all rating elicitation algorithms and usually is previously computed for the warm CF method. The summary of the complexity analysis is shown in Table~\ref{tab:complexity}. The detailed complexity analysis of Square Maxvol and Rectangular Maxvol is provided in Sections \ref{sec:maximal} and \ref{sec:rectmaxvol_complexity} respectively.

\begin{table}[t]
  \centering
  \begin{tabular}{|l|l|l|}
    \hline
    \textbf{Algorithm} & \textbf{Complexity} \\
    \hline
	\hline
	Square Maxvol & $O(mL_0^2)$ \\
	\hline
	Rectangular Maxvol & $O(mL_0^2)$ \\
	\hline
	GreedyExtend & $O(m^2nL_0^2)$ \\
	\hline
	Backward Greedy & $O(m^3f^2)$\\
    \hline
  \end{tabular}
	\vspace{2mm} 
  \caption{Complexity of the algorithms}
  \label{tab:complexity}
\end{table}

\subsection{Cold Start Problem}
Apart from rating elicitation methods, there were also different approaches to cold start problem proposed in the literature. Additional context information (e.g., category labels \cite{zhang2015dualds} or all available metadata \cite{barjasteh2015cold}) may be used. Moreover, there is a class of methods that use adaptive tree-based questionnaires to acquire the initial information about new users \cite{golbandi2011adaptive, graus2015improving, karimi2014improved, karimi2013factorized, sun2013learning,  zhou2011functional}. Moreover, the cold start problem can be viewed from the exploration-exploitation trade-off point of view \cite{ aharon2015excuseme, zhao2013interactive}. The methods from \cite{cremonesi2012user, kluver2014evaluating} analyse the performance of CF methods w.r.t.\ the number of known ratings for a user.

\subsection{Maximal-Volume Concept}
The maximal-volume concept, originally described in the field of low-rank approximation of matrices~\cite{goreinov2001maximal}, provides an approach for a matrix approximation in a pseudo-skeleton form, which is a product of matrices formed by columns or rows of the source matrix.
The algorithm, called Maxvol~\cite{goreinov2010find}, allows to efficiently find a well-conditioned submatrix with a high enough volume for building such an approximation. Maximal volume submatrices are useful not only for low-rank approximations, but also in wireless communications~\cite{wang2010global}, preconditioning of overdetermined systems \cite{arioli2014preconditioning}, tensor decompositions \cite{oseledets2010tt}, and recommender systems \cite{liu2011wisdom}. Our generalization of the maximal-volume concept to rectangular case offers additional degrees of freedom, what is potentially useful in any of these areas.

\section{Conclusions}
\label{sec:conclusions}

In our paper, we overviewed the existing approaches for the rating elicitation and introduced the efficient algorithm based on the definition of rectangular matrix volume. Moreover, in order to demonstrate the superiority of the proposed method, we provided the analytical and experimental comparison to the existing approaches. It seems to be an interesting direction of future work to apply the proposed framework to building tree-based cold-start questionnaires in recommender systems.

Another interesting direction for future work is to join approaches from two classes: based on the maximal-volume concept and based on optimal design criteria. They historically came from absolutely different fields: from computational lineal algebra and from statistical experimental analysis respectively. Although all these methods are very similar from the mathematical point of view, it seems quite interesting to explore their similarities and differences.

\section{Acknowledgments}

Work on problem setting and numerical examples was supported by Russian
Science Foundation grant 14-11-00659. Work on theoretical estimations of approximation error and practical algorithm was supported by Russian Foundation for Basic Research 16-31-00351 mol\_a. Also we thank Evgeny Frolov for helpful discussions.

\bibliographystyle{./IEEEtran}


\end{document}